\newcommand{\hr}{{\mathcal H}}
\newcommand{\cn}{{\mathcal N }}
\newcommand{\E}{{\mathcal E }}
\newcommand{\kr}{{\mathcal K}}
\newcommand{\rr}{{\mathbb R}}
\newcommand{\nn}{{\mathbb N}}
\newcommand{\B}{\mathcal B}
\newcommand{\bX}{\mathbf X}
\newcommand{\eins}{{\mathbbm{1}}}
\newcommand{\spec}{\mathrm{spec}}
\newtheorem{theorem}{Theorem}
\newtheorem*{acknowledgement}{Acknowledgement}
\newtheorem{lemma}{Lemma}
\newtheorem{remark}{Remark}
\newcommand{\tr}{\mathrm{tr}}
\newcommand{\supp}{\mathrm{supp}}
\begin{document}
\title{The depolarising channel and Horns problem}
\author{Janis N\"otzel \\
\scriptsize{Electronic address: janis.noetzel@tum.de}
\vspace{0.2cm}\\
$^{1}$ {\footnotesize Theoretische Informationstechnik, Technische Universit\"at M\"unchen,}\\
{\footnotesize 80290 M\"unchen, Germany}
}
\maketitle
\begin{abstract}
We investigate the action of the depolarising (qubit) channel on permutation invariant input states. More specifically, we raise the question on which invariant subspaces the output of the depolarising channel, given such special input, is supported. An answer is given for equidistributed states on isotypical subspaces, also called symmetric Werner states. Horns problem and two of the corresponding inequalities are invoked as a method of proof.
\end{abstract}

\begin{section}{Introduction}
This is yet another brick we build in order to get a deeper understanding of the connections between quantum information theory and representation theory of the symmetric group. Despite considerable work (see \cite{calderbank-rains-shor-sloane} and references therein to get a feeling for error correction - error correcting codes naturally work as entanglement transmission codes for the depolarising channel, and \cite{ouyang}, \cite{fern-whaley} for more recent approaches) on the topic, the question when exactly the entanglement transmission capacity of the depolarising (qubit) channel is greater than zero is still unanswered. Yet, this channel is one of the simplest nontrivial models of a quantum channel one could think of. An answer to above question could also give hints to a solution of the very same question for more complex channel models (as the one considered in \cite{abbn}, for example).\\
This work provides bounds on the Hilbert Schmidt scalar product between normalized projections onto isotypical subspaces of the symmetric group and the output states of the depolarising (qubit) channel, given an input of the very same structure. Results are given in dependence of the depolarising parameter.\\
The method of proof is to decompose the action of the channel into convex combinations of channels that act as identity on a certain number of subsystems and as `useless channel' on the remaining ones.\\
It is, actually, the deeper one of our two results to calculate when exactly above mentioned Hilbert Schmidt scalar product is equal to zero, when the channel under consideration is given by some product of identity channel and `useless channel'. Here, we invoke the asymptotical connection between Littlewood-Richardson coefficients and Horns problem that was first noted in \cite{lidskii}, proven in \cite{klyachko} and brought to our attention through the work of \cite{christandl-cc}, who gave a proof resting solely on quantum information theoretic tools. In a second step we then use two of the inequalities emerging in Horns problem \cite{horn}.\\
We should note that our original intent was to prove this result for normalized projections onto \emph{irreducible} subspaces, not only \emph{isotypical} ones and that the weaker formulation we give here is only due to the fact that the right tools to deal with that question seem to be missing.\\
However, we find that the connection between the three different worlds that shows up in our proof is of independent interest. Our Theorem \ref{mainresult} can be interpreted as a 'reverse' variant of a quantum de Finetti theorem, where we restrict attention to a special class of symmetric Werner states (see \cite{werner}) as was, in the de Finetti scenario, done e.g. in \cite{ckmr}.
\end{section}
\begin{section}{Notation}
The symbols $\lambda,\mu,\nu,\gamma$ will be used to denote {\bf Young frames}. The row lengths of a young frame $\lambda$ with $d$ rows and $n$ entries will be denoted $\lambda_i$, $i=1,\ldots,d$.
The set of Young frames with at most $d\in\nn$ rows and $n\in\nn$ boxes is denoted $YF_{d,n}$.\\
For a given natural number $N$, the set $\{1,\ldots,N\}$ will be abbreviated $[N]$.\\
{\bf Hilbert spaces} are all assumed to have finite dimension and are over the field $\mathbb C$. The linear space of operators over a Hilbert space $\kr$ is written $\mathcal B(\kr)$.\\
For a natural number $n$, the symbol $\mathbb B$ denotes (with a slight abuse of notation) the standard {\bf representation} of the {\bf symmetric group} $S_n$ on $\hr^{\otimes n}$. The symbol $\hr$ is reserved for a Hilbert space of dimension $d\in\nn$. The unique irreducible representation of $S_n$ corresponding to a Young tableau $\lambda$ will be written $F_\lambda$.\\
The {\bf multiplicity of an irreducible subspace} of $\mathbb B$ corresponding to a Young frame $\lambda$ is given by the dimension of the corresponding irreducible representation $U^d_\lambda$ of the standard representation of the unitary group $U(d)$ on $\hr^{\otimes n}$, it is written $\dim U^d_\lambda$.\\
{\bf Projections onto isotypical subspaces} are written $P_\lambda$ ($\lambda\in YF_{d,n}$). The corresponding 'flat' states are denoted $\pi_\lambda$ and defined through $\pi_\lambda:=\frac{1}{\tr\{P_\lambda\}}P_\lambda$, with $\tr$ denoting the usual trace function on $\hr^{\otimes n}$.\\
The set of {\bf probability distributions} on a finite set $\bX$ is denoted $\mathfrak P(\bX)$, the cardinality
of $\bX$ by $|\bX|$.\\
For two probability distributions $r,s\in\mathfrak P(\bX)$, the distance between them is measured by
$\|r-s\|:=\sum_{x\in\bX}^d|r(x)-s(x)|$.\\
An important entropic quantity is the {\bf relative entropy}. We define it (using {\bf base two logarithm} which is, throughout, written as $\log$).
as follows: Given a finite set $\bX$ and two probability distributions $r,s\in\mathfrak P(\bX)$, the relative entropy $D(r||s)$ is given by
\begin{align}
D(r||s):=\left\{\begin{array}{ll}\sum_{x\in \bX}r(x)\log(r(x)/s(x)),&\mathrm{if}\ s\gg r\\ \infty,&\mathrm{else}\end{array}.\right.
\end{align}
In case that $D(r||s)=\infty$, for a positive number $a>0$, we use the convention $2^{-aD(r||s)}=0$. The relative entropy is connected to $\|\cdot\|$ by the Pinsker's inequality $D(r||s)\leq\frac{1}{2\ln(2)}\|r-s\|^2$.\\
The {\bf binary entropy} of $r\in\mathfrak P(\{0,1\})$ is defined by the formula
\begin{align}
h(r):=-\sum_{x\in \bX}r(x)\log(r(x)).
\end{align}
If $B\subset[n]$ for some $n\in\nn$, then $\tr_B$ denotes the usual {\bf partial trace} functional, $\tr_B:=\otimes_{i=1}^n\E_{\eins_B(i)}$, where $\eins_B$ is the indicator function on $[n]$ and $\E_0:=Id_{\hr}$, $\E_1:=\tr_\hr$. For $A\in\mathcal B(\hr^{\otimes n})$, $\tr_B\{A\}\otimes\pi_B:=\otimes_{i=1}^n\cn_{\eins_B(i)}$, where $\cn_0=Id$ and $\cn_1=T$, with $T$ defined by $T(a):=\pi_\hr\tr\{a\}$ ($a\in\mathcal B(\hr)$). The symbol $\pi_\hr$ denotes the {\bf maximally mixed state} on $\hr$: $\pi_\hr=\frac{1}{d}\eins_\hr$.
\end{section}
\begin{section}{Results and their proofs}
\begin{theorem}\label{mainresult}
Let $\lambda\in YF_{d,n}$. Let $B\subset[n]$. Then for all $\lambda'\in YF_{d,n}$ for which $|\lambda_m'-\lambda_m|>(d-1)|B|$ occurs for an $m\in[d]$ we have
\begin{align}
 \tr\{P_{\lambda'}(\tr_B\{P_\lambda\}\otimes\eins_B)\}=0,
\end{align}
and this is equivalent to the statement that, with the definition
\begin{align}
\overline{\mathbf S}_n:\mathcal B(\hr^{\otimes n})\to\mathcal B(\hr^{\otimes n}),\qquad A\mapsto\sum_{\tau\in S_n}\frac{1}{n!}\mathbb B(\tau)A\mathbb B(\tau^{-1}),
\end{align}
it holds
\begin{align}
 \tr\{P_{\lambda'}\overline{\mathbf S}_n(\tr_B\{P_\lambda\}\otimes\eins_B)\}=0.
\end{align}
\end{theorem}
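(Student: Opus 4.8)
The plan is to handle the two assertions separately: first the equality of the two traces (the stated ``equivalence''), which is essentially automatic, and then the actual vanishing, which is the substance. For the equivalence, I would note that $\overline{\mathbf{S}}_n$ is the orthogonal projection, in the Hilbert--Schmidt inner product, onto the $S_n$-invariant operators, and that it is self-adjoint: for all $X,Y$ one has $\tr\{Y^\dagger\overline{\mathbf{S}}_n(X)\}=\tr\{\overline{\mathbf{S}}_n(Y)^\dagger X\}$ by the substitution $\tau\mapsto\tau^{-1}$ together with cyclicity of the trace. Since $P_{\lambda'}$ projects onto an isotypical subspace, it commutes with every $\mathbb{B}(\tau)$, so $\overline{\mathbf{S}}_n(P_{\lambda'})=P_{\lambda'}$. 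Hence $\tr\{P_{\lambda'}\overline{\mathbf{S}}_n(X)\}=\tr\{\overline{\mathbf{S}}_n(P_{\lambda'})X\}=\tr\{P_{\lambda'}X\}$ for \emph{every} $X$, and in particular for $X=\tr_B\{P_\lambda\}\otimes\eins_B$; the two numbers are literally equal, so no inequality is needed here.

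For the vanishing I would first move to the smaller space. By the adjointness of partial trace and ampliation, $\tr\{P_{\lambda'}(\tr_B\{P_\lambda\}\otimes\eins_B)\}=\tr\{\tr_B\{P_{\lambda'}\}\,\tr_B\{P_\lambda\}\}$, a Hilbert--Schmidt pairing of two operators on $\hr^{\otimes(n-|B|)}$. The structural point is that each $\tr_B\{P_\lambda\}$ is \emph{central}: it commutes with the $S_{n-|B|}$-action (covariance of the partial trace) and with the diagonal $U(d)$-action on the surviving factors, so by Schur--Weyl duality it is a nonnegative combination $\sum_{\mu\in YF_{d,n-|B|}}c^{(\lambda)}_\mu P_\mu$ of the isotypical projections $P_\mu$ on $\hr^{\otimes(n-|B|)}$. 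Consequently the pairing equals $\sum_\mu c^{(\lambda)}_\mu c^{(\lambda')}_\mu\tr\{P_\mu\}$, and it vanishes as soon as no single $\mu$ lies in the support of both $\tr_B\{P_\lambda\}$ and $\tr_B\{P_{\lambda'}\}$.

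The heart of the matter, and the step I expect to be the main obstacle, is to pin down this support. I would compute $c^{(\lambda)}_\mu\propto\tr\{(P_\mu\otimes\eins_B)P_\lambda\}$ and read it off representation-theoretically: this overlap is positive iff the $U(d)$-irreducible $U^d_\lambda$ occurs in $U^d_\mu\otimes\hr^{\otimes|B|}$, equivalently (iterating Pieri's rule, i.e.\ positivity of a Littlewood--Richardson coefficient $c^\lambda_{\mu\gamma}$ for some $\gamma\in YF_{d,|B|}$) iff $\mu\subseteq\lambda$ with $|\lambda/\mu|=|B|$. This containment is exactly the content of the two extreme Weyl inequalities emerging in Horn's problem, namely $\mu_m\le\lambda_m\le\mu_m+\gamma_1\le\mu_m+|B|$, applied with the positive-semidefinite summand of spectrum $\gamma$. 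A common $\mu$ in both supports would then force $\lambda_m,\lambda'_m\in[\mu_m,\mu_m+|B|]$ for every $m$, hence $|\lambda_m-\lambda'_m|\le|B|\le(d-1)|B|$ for all $m$, contradicting the hypothesis $|\lambda'_m-\lambda_m|>(d-1)|B|$. Thus no common $\mu$ exists and the trace is zero.

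I expect the only delicate point to be making the support characterisation (``$c^{(\lambda)}_\mu>0$ iff $\mu\subseteq\lambda$'') fully rigorous, since it is precisely what converts the global statement about $P_{\lambda'}$ and the channel output into a purely combinatorial condition on Young frames; the centrality of $\tr_B\{P_\lambda\}$ and the reduction to a diagonal sum are then routine bookkeeping. It is worth recording that this line of argument in fact yields the sharper sufficient condition $\sum_m|\lambda_m-\lambda'_m|>2|B|$ for vanishing, of which the stated criterion $|\lambda'_m-\lambda_m|>(d-1)|B|$ is a special case.
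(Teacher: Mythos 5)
Your proof is correct, and for the decisive step it takes a genuinely different (and in fact sharper) route than the paper. Both arguments rest on the same structural fact --- that $\tr_B\{P_\lambda\}$ is a nonnegative combination of isotypical projections $P_\mu$ with $\mu$ in the support iff some Littlewood--Richardson coefficient $c^\lambda_{\mu\gamma}$, $\gamma\in YF_{d,|B|}$, is positive (the paper imports this as an explicit formula from the de Finetti paper of Christandl et al., you derive it from Schur--Weyl centrality) --- but they diverge in how they exploit LR-positivity. The paper passes through Christandl's operator characterisation ($c^\lambda_{\mu\nu}\neq0$ iff $\mathfrak A+\mathfrak B=\mathfrak C$ with the prescribed spectra) and then applies the Weyl inequalities $\lambda_{i+j-1}\le\mu_i+\nu_j$ to \emph{both} triples $(\mu,\nu,\lambda)$ and $(\mu,\gamma,\lambda')$, summing over rows to arrive at $|\lambda_m-\lambda'_m|\le(d-1)|B|$; you instead use the containment property $\mu\subseteq\lambda$, $|\lambda/\mu|=|B|$ (iterated Pieri, or equivalently Weyl monotonicity $\mu_m\le\lambda_m$ plus box counting), which immediately gives $\lambda_m,\lambda'_m\in[\mu_m,\mu_m+|B|]$ and hence the stronger bounds $|\lambda_m-\lambda'_m|\le|B|$ and $\sum_m|\lambda_m-\lambda'_m|\le2|B|$ for a nonvanishing overlap. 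Your conditions imply the theorem as stated (and improve it for $d>2$), and your verification that the stated criterion is a special case of the $\ell_1$ criterion is also right, since $\sum_r|\lambda_r-\lambda'_r|\ge2|\lambda_m-\lambda'_m|$. Your treatment of the ``equivalence'' is likewise cleaner than the paper's: the paper only records the support inclusion $\supp(\tr_B\{P_\lambda\}\otimes\pi_B)\subset\supp(\overline{\mathbf S}_n[\cdot])$, which gives one direction, whereas your observation that $\overline{\mathbf S}_n$ is Hilbert--Schmidt self-adjoint and fixes the central projection $P_{\lambda'}$ shows the two traces are literally equal. The one point I would tighten is the phrasing that the containment ``is exactly the content of the two extreme Weyl inequalities'': the containment $\mu\subseteq\lambda$ together with $|\lambda/\mu|=|B|$ is what you actually use, and the upper bound $\lambda_m\le\mu_m+|B|$ then follows from box counting alone, so the appeal to the second Weyl inequality is dispensable; stated that way the argument is fully rigorous.
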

The following theorem is an easy application of Theorem \ref{mainresult}. We focus on the (more elementary) case of the depolarising \emph{qubit} channel, although our results could be applied to the general case as well.
\begin{theorem}\label{application}
Let $\lambda,\lambda'\in YF_{2,n}$. Let $p\in[0,1]$ and $\cn_p:\mathcal B(\mathbb C^2)\to\mathcal B(\mathbb C^2)$ be the depolarising channel with depolarising probability $p$. Then if $|\lambda_1-\lambda_1'|>n\cdot p$ we have
\begin{align}
\tr\{P_{\lambda'}\cn^{\otimes l}_p(P_\lambda)\}\leq2^{-n(\frac{2}{\log 2}(\frac{1}{n}|\lambda_1-\lambda_1'|-p)^2-\Delta(n))}
\end{align}
for some function $\Delta:\nn\rightarrow\mathbb R_+$ satisfying $\lim_{n\to\infty}\Delta(n)=0$.
\end{theorem}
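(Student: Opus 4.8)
Throughout write $l=n$, so that $\cn_p^{\otimes n}$ acts on all of $\hr^{\otimes n}$. The plan is to exploit the convex–combination structure of the depolarising channel together with Theorem \ref{mainresult}, and then to control the surviving weights by a binomial large–deviation estimate. Writing $\cn_p=(1-p)\,\mathrm{Id}+p\,T$ with $T(a)=\pi_\hr\tr\{a\}$ as in the Notation section, and expanding the tensor power, gives
\begin{align}
\cn_p^{\otimes n}=\sum_{B\subseteq[n]}(1-p)^{n-|B|}p^{|B|}\,\Phi_B,\qquad \Phi_B(A):=\tr_B\{A\}\otimes\pi_B ,
\end{align}
whence, by linearity,
\begin{align}
\tr\{P_{\lambda'}\cn_p^{\otimes n}(P_\lambda)\}=\sum_{B\subseteq[n]}(1-p)^{n-|B|}p^{|B|}\,\tr\{P_{\lambda'}(\tr_B\{P_\lambda\}\otimes\pi_B)\} .
\end{align}
First I would invoke Theorem \ref{mainresult} with $d=2$. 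For two-row frames one has $\lambda_1+\lambda_2=n=\lambda_1'+\lambda_2'$, so $|\lambda_1-\lambda_1'|=|\lambda_2-\lambda_2'|$, and the hypothesis ``$|\lambda_m'-\lambda_m|>(d-1)|B|$ for some $m$'' collapses to $|\lambda_1-\lambda_1'|>|B|$. Putting $k:=|\lambda_1-\lambda_1'|$, Theorem \ref{mainresult} forces every summand with $|B|<k$ to vanish, leaving only the contributions with $|B|\geq k$.

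The second step uses permutation symmetry. Since $P_{\lambda'}$ is $S_n$-invariant we may replace $\Phi_B(P_\lambda)$ by $\overline{\mathbf S}_n(\Phi_B(P_\lambda))$, and conjugating $B$ by a permutation shows that $\tr\{P_{\lambda'}(\tr_B\{P_\lambda\}\otimes\pi_B)\}$ depends on $B$ only through its cardinality; denote this common value $g(|B|)$. The decisive estimate is that, after passing to the flat (symmetric Werner) state $\pi_\lambda=P_\lambda/\tr\{P_\lambda\}$, the number $\tr\{P_{\lambda'}\Phi_B(\pi_\lambda)\}$ is the probability that the depolarised state is found in the $\lambda'$-isotypical sector, hence is bounded by $1$. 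Feeding this normalised bound back into the expansion yields
\begin{align}
\tr\{P_{\lambda'}\cn_p^{\otimes n}(\pi_\lambda)\}\ \le\ \sum_{j=k}^{n}\binom{n}{j}(1-p)^{n-j}p^{j}\ =\ \Pr[\mathrm{Bin}(n,p)\geq k] .
\end{align}

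The third step is the large–deviation bound. Since $k/n>p$ by hypothesis, Chernoff's inequality gives $\Pr[\mathrm{Bin}(n,p)\geq k]\le 2^{-nD(\frac{k}{n}\|p)}$ with $D$ the binary relative entropy in bits, and Pinsker's inequality (in the sharp direction $D(r\|s)\geq\frac{1}{2\ln 2}\|r-s\|^2$) lower-bounds the exponent by a term of the form $\frac{2}{\log 2}(\frac{1}{n}|\lambda_1-\lambda_1'|-p)^2$, matching the statement. Collecting the remaining sub-exponential losses — the $\mathrm{poly}(n)$ factors from the multiplicities $\dim U^2_{\lambda'}$ and the lower-order Stirling terms — into a single factor $2^{n\Delta(n)}$ with $\Delta(n)\to 0$ delivers the asserted bound for the symmetric Werner state $\pi_\lambda$; recovering the form printed for $P_\lambda$ is then the normalisation issue taken up next.

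The hard part is precisely the decisive estimate of the second step: one must show that the surviving per-term overlaps are controlled by a factor that is at most sub-exponential in $n$, uniformly over $j\geq k$. The probabilistic reading through $\pi_\lambda$ makes this transparent, since each $\tr\{P_{\lambda'}\Phi_B(\pi_\lambda)\}\le 1$; but for the unnormalised projector $P_\lambda$ the overlap carries the genuinely exponential dimension $\tr\{P_\lambda\}$, so one must either work with the symmetric Werner state $\pi_\lambda$ throughout or supply a representation-theoretic bound on the multiplicity-space block $X_{\lambda'}$ appearing in the $U(2)$-isotypical decomposition $\Phi_B(P_\lambda)=\bigoplus_{\mu}X_\mu\otimes\eins_{U^2_\mu}$. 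A secondary, bookkeeping, difficulty is routing all lower-order contributions into $\Delta(n)$ while preserving $\lim_{n\to\infty}\Delta(n)=0$ and matching the constant $\frac{2}{\log 2}$.
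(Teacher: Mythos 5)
Your proposal is correct and follows essentially the same route as the paper: expand $\cn_p^{\otimes n}$ into the product channels $\tr_B\{\cdot\}\otimes\pi_B$, kill every term with $|B|<|\lambda_1-\lambda_1'|$ via Theorem \ref{mainresult} (using $d=2$ and $|\lambda_1-\lambda_1'|=|\lambda_2-\lambda_2'|$ exactly as you do), bound each surviving term for the normalised state $\pi_\lambda$ by $1$, and finish with the binomial tail plus Pinsker --- the only cosmetic difference being that the paper bounds the tail term-by-term with the type estimate on $\binom{n}{k}$ rather than by a single Chernoff bound, which is why its $\Delta(n)$ carries an extra $\frac{1}{n}\log n$. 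The normalisation worry you flag at the end is not a gap in your argument but a discrepancy already present in the paper: its displayed chain also silently replaces $P_\lambda$ by $\pi_\lambda$, so what is actually established there is precisely the bound for the symmetric Werner state, as in your version.
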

\begin{remark}
If $|\lambda_1-\lambda_1'|<n\cdot p$, then the approach pursued in this paper does not give good estimates, since the method of proof for Theorem \ref{application} is to apply Theorem \ref{mainresult}, and this only says when $\tr\{P_{\lambda'}\overline{\mathbf S}_n(\tr_B\{P_\lambda\}\otimes\eins_B)\}=0$ holds, without computing estimates on the l.h.s.
\end{remark}
\begin{proof}[Proof of Theorem \ref{application}]
The depolarising qubit channel $\cn_p=p\cdot Id+(1-p)T$ can be written as
\begin{align}
\cn_p=\bar p(0)\cn_0+\bar p(1)\cn_1,
\end{align}
where $\bar p(0)=p$ and $\bar p(1)=1-p$ hold. Then, clearly, with $\bar p^{\otimes n}\in\mathfrak P(\{0,1\}^n)$ defined by $\bar p^{\otimes n}(x^n):=\prod_{i=1}^n\bar p(x_i)$ and $\cn_{x^n}:=\otimes_{i=1}^n\cn_{x_i}$ we get
\begin{align}
\cn_p^{\otimes n}&=\sum_{x^n\in\{0,1\}^n}\bar p^{\otimes n}(x^n)\cn_{x^n}\\
&=\sum_{k=0}^n2^{-(k\log p+(n-k)\log(1-p))}\sum_{x^n\in T_k}\cn_{x^n},
\end{align}
where we additionally used the typical sets $T_k:=\{x^n\in\{0,1\}^n:N(1|x^n)=k\}$, where $N(1|x^n)$ is the number of ones in the word $x^n$. All that is left to do is examine the term $\sum_{x^n\in T_k}\cn_{x^n}$:
\begin{align}
\sum_{x^n\in T_k}\cn_{x^n}(\cdot)&=\sum_{B\subset[n]:|B|=k}\tr_B\{\cdot\}\otimes\pi_B\\
&={n\choose k}\frac{1}{n!}\sum_{\tau\in S_n}\mathbb B(\tau)[\tr_{[k]}\{\mathbb B(\tau^{-1})\cdot\mathbb B(\tau)\}\otimes\pi_{[k]}]\mathbb B(\tau^{-1}).
\end{align}
But $\mathbb B(\tau^{-1})P_\lambda\mathbb B(\tau)=P_\lambda$ for every $\tau\in S_n$, so
\begin{align}
\cn_p^{\otimes n}(P_\lambda)&=\sum_{k=0}^n2^{-(k\log p+(n-k)\log(1-p))}{n\choose k}\frac{1}{n!}\sum_{\tau\in S_n}\mathbb B(\tau)[\tr_{[k]}\{P_\lambda\}\otimes\pi_{[k]}]\mathbb B(\tau^{-1})\\
&=\sum_{k=0}^n2^{-(k\log p+(n-k)\log(1-p))}{n\choose k}\overline{\mathbf{S}}_n[\tr_{[k]}\{P_\lambda\}\otimes\pi_{[k]}].
\end{align}
Especially, this last equality holds as well if we replace $P_\lambda$ by a projection onto \emph{any} subspace that is irreducible under the action of $\mathbb B$. It now follows, by the estimate ${n\choose k}\leq2^{-n(h(\bar k)-\Delta_1(n))}$ that can be found e.g. in \cite{csiszar-koerner} (here, $\Delta_1:\nn\to\rr_+$ satisfies $\lim_{n\to\infty}\Delta_1(n)=0$ and $\bar k$ is (for every $k\in\nn$) the distribution $\bar k\in\mathfrak P(\{0,1\})$ introduced as $\bar k(0):=k/n$) and using Pinsker's inequality
\begin{align}
 \tr\{P_{\lambda'}\cn_p^{\otimes n}(\pi_\lambda)\}&\leq\sum_{k=0}^n2^{-n(D(\bar k||\bar p)-\Delta_1(n))}\tr\{P_{\lambda'}\overline{\mathbf S_n}(\tr_{[k]}\{\pi_\lambda\}\otimes\pi_{n-k})\}\\
&\leq n\sum_{k:|\lambda_1-\lambda'_1|\leq k}^n2^{-n(D(\bar k||p)-\Delta_1(n))}\\
&\leq2^{-n(\frac{2}{\ln2}(\frac{1}{n}|\lambda_1-\lambda_1'|-p)^2-\Delta(n))}.
\end{align}
The function $\Delta$ is defined by $\Delta(n):=\Delta_1(n)+\frac{1}{n}\log(n)$.
\end{proof}
\begin{proof}[Proof of Theorem \ref{mainresult}]
First, let $B=\{l+1,\ldots,n\}$ for some $l\in[n]$ and set $k:=n-l$, $A=[l]$. We will see later, that this is without loss of generality. Then we have a natural action of $S_l\times S_k$
on $\hr_A\otimes \hr_B:=\hr^{\otimes k}\otimes\hr^{\otimes l}$ as
\begin{align}\label{eqn-1}
 \mathbb B^{A\times B}(\pi,\tau)v=\mathbb B^A(\pi)\otimes\mathbb B^B(\tau)v\qquad (\pi\in S_l,\ \tau\in S_k,\ v\in \hr_A\otimes\hr_B).
\end{align}
It is clear that $S_l\times S_k\subset S_n$ holds and, by equation (\ref{eqn-1}) it is also clear that
\begin{align}
\mathbb B\restriction _{S_l\times S_k}= \mathbb B^{A\times B},
\end{align}
so the irreducible subspaces of $\mathbb B$ can be decomposed into direct sums of irreducible subspaces of $\mathbb B^{A\times B}$. Prototypes of the latter can easily
be constructed. Take an irreducible subspace $V_\mu$ of $\mathbb B^A$ and another one, $V_\nu$, of $\mathbb B^k$. Then $V_\mu\otimes V_\nu$ is an irreducible subspace of
$\mathbb B^{A\times B}$.\\
However, not every irreducible subspace of $\mathbb B^{A\times B}$ arises in that way, as can easily be seen by considering $n=2$, $l=k=1$ and taking the irreducible subspace
$V_{+}=\mathrm{span}(\{e_1\otimes e_2+e_2\otimes e_1\})$. It is well-known, that this subspace can not be decomposed as $V_+=V_1\otimes V_2$. What \emph{can} be said in this context
is the following.\\
If the Littlewood-Richardson coefficient $c^\lambda_{\mu\nu}\neq0$ for some choice of $\lambda\in YF_{d,n},\ \mu\in YF_{d,l},\ \nu\in YF_{d,k}$ then each \emph{irreducible} subspace
$\mathcal V_{\lambda}$ of $\mathbb B$ contains one irreducible subspace $\mathcal V_{\mu\nu}$ of $\mathbb B^{A\times B}$ with dimension $\dim(\mathcal V_{\mu\nu})=\dim(F_\mu)\cdot\dim(F_\nu)$.
The latter is, itself, part of the \emph{isotypical} subspace $V_{\mu\nu}$ of $\mathbb B^{A\times B}$ and this subspace indeed satisfies
\begin{align}
V_{\mu\nu}=V_\mu\otimes V_\nu.
\end{align}
So, what can be said is the following. For each $\lambda\in YF_{d,n}$ we have
\begin{align}
 V_\lambda\subset \bigoplus_{\mu\in YF_{d,l},\ \nu\in YF_{d,k}}\mathbbm 1_{\{\mu,\nu:c^\lambda_{\mu\nu}\neq0\}}(\mu,\nu)V_\mu\otimes V_\nu.
\end{align}
This also implies that
\begin{align}
 P_\lambda\leq\sum_{\mu\in YF_{d,l},\ \nu\in YF_{d,k}}\mathbbm 1_{\{\mu,\nu:c^\lambda_{\mu\nu}\neq0\}}(\mu,\nu)P_\mu\otimes P_\nu.\label{eqn-2}
\end{align}
Using this approach, it is possible to obtain the necessary estimates to prove Theorem \ref{mainresult}. A more elegant way is to use results of \cite{ckmr} (we thank M. Christandl for making them known to us), stating that
\begin{align}
\tr_{[k]}\{P_\lambda\}=\dim U^d_\lambda\sum_{\nu,\mu}c^\lambda_{\mu\nu}\frac{\dim F_\nu}{\dim U^d_\mu}P_\mu.
\end{align}
Keeping in mind that $\supp(\tr_{[k]}\{P_\lambda\}\otimes\pi_{[k]})\subset\supp(\overline{\mathbf S}_n[\tr_{[k]}\{P_\lambda\}\otimes\pi_{[k]}])$ holds, it is clear that Theorem \ref{mainresult} can possibly be proven by using estimates on the r.h.s. rather than the l.h.s., and indeed this is true and we will follow this idea. Now, by linearity of $\overline{\mathbf S}_n$
\begin{align}
\overline{\mathbf S}_n[\tr_{[k]}\{P_\lambda\}\otimes\eins_{[k]}]=\dim U^d_\lambda\sum_{\nu,\mu,\gamma}c^\lambda_{\mu\nu}\frac{\dim F_\nu}{\dim U^d_\mu}\overline{\mathbf S}_n[P_\mu\otimes P_\gamma]
\end{align}
and, by invariance of $\overline{\mathbf S}_n[P_\mu\otimes P_\gamma]$ under the usual product action $U^{\otimes n}$ of $U^d$ on $\hr^{\otimes n}$ we get
\begin{align}
\overline{\mathbf S}_n[P_\mu\otimes P_\gamma]&=\sum_{\lambda'} \alpha^{\lambda'}_{\mu\gamma}P_{\lambda'}
\end{align}
for some set of coefficients $\alpha^{\lambda'}_{\mu\nu}$. Obviously, $\alpha^{\lambda'}_{\mu\gamma}=0\Leftrightarrow c^{\lambda'}_{\mu\gamma}=0$. We know even more:
\begin{align}
\alpha^{\lambda'}_{\mu\gamma}&=\frac{1}{\dim F_{\lambda'}}\tr\{P_{\lambda'}\overline{\mathbf S}_n[P_\mu\otimes P_\gamma]\}\\
&=\frac{1}{\dim F_{\lambda'}}\tr\{P_{\lambda'}(P_\mu\otimes P_\gamma)\}\\
&=c^{\lambda'}_{\mu\gamma}\frac{\dim F_\mu\dim F_\gamma}{\dim F_{\lambda'}}\dim U^d_{\lambda'},
\end{align}
so
\begin{align}
\overline{\mathbf S}_n[\tr_{[k]}\{P_\lambda\}\otimes\eins_{[k]}]=\sum_{\nu,\mu,\gamma,\lambda'}c^\lambda_{\mu\nu}c^{\lambda'}_{\mu\gamma}\cdot(\frac{\dim U^d_\lambda\dim U^d_{\lambda'}}{\dim U^d_\mu}\cdot\frac{\dim F_\nu\dim F_\mu\dim F_\gamma}{\dim F_{\lambda'}} )\cdot P_{\lambda'}.
\end{align}
At this point, it should in principle be possible to get better results by solving, for arbitrary $\lambda,\lambda'\in YF_{d,n}$, the optimization problems
\begin{align}
X_{\lambda,\lambda'}&:=\max\{\dim F_\nu\dim F_\mu\dim F_\gamma:c^\lambda_{\mu\nu}c^{\lambda'}_{\mu\gamma}\neq0\ \wedge\ \mu\in YF_{d,l}\ \wedge\ \nu,\gamma\in YF_{d,k}\},\\
Y_{\lambda,\lambda'}&:=\min\{\dim F_\nu\dim F_\mu\dim F_\gamma:c^\lambda_{\mu\nu}c^{\lambda'}_{\mu\gamma}\neq0\ \wedge\ \mu\in YF_{d,l}\ \wedge\ \nu,\gamma\in YF_{d,k}\}.
\end{align}
Bounds, especially anything better than the trivial upper bound '$X_{\lambda\lambda'}\leq poly(n)\cdot\dim F_\lambda\cdot2^k$', would, at least from our perspective, be of further interest for the question at hand. We come back to the problem after completing this proof.\\
Now letting $\mu,\nu,\gamma$ denote Young frames that contribute to the above sum (meaning that $c^\lambda_{\mu\nu}\neq0$ and $\gamma\in YF_{d,k}$ holds), we ask what a Young frame $\lambda'\in YF_{d,n}$ has to fulfill in order for $c^{\lambda'}_{\mu\gamma}\neq0$.\\
We know from \cite{christandl-cc} that $c^\lambda_{\mu\nu}\neq0$ if and only if there exist nonnegative operators $\mathfrak A,\mathfrak B,\mathfrak C\in\B(\hr)$ such that
\begin{align}
\mathfrak A+\mathfrak B=\mathfrak C,\ \ \spec(\mathfrak A)=\mu,\ \ \spec(\mathfrak B)=\nu,\ \ \spec(\mathfrak C)=\lambda
\end{align}
hold. For the same reason, if $\lambda'$ shall fulfill $c^{\lambda'}_{\mu\gamma}\neq0$, then there have to exist nonnegative operators $\mathfrak D,\mathfrak C'\in\B(\hr)$ such that
\begin{align}
 \mathfrak A+\mathfrak D=\mathfrak C',\ \ \spec(\mathfrak D)=\gamma,\ \ \spec(\mathfrak C')=\lambda'
\end{align}
hold. The relations between all these spectra are governed by Horn's inequalities \cite{horn} (see e.g. \cite{bhatia-story,fulton} to get some feeling for the topic), and from these we only need the very basic ones, namely:
\begin{align}\label{eqn:horns1}
\lambda_m\leq\mu_i+\nu_j&\qquad\forall\ i,j,m\in[d]\ :\ m=i+j-1\\
&\mathrm{and}\nonumber\\
\label{eqn:horns2}\lambda'_m\leq\mu_i+\gamma_j&\qquad\forall\ i,j,m\in[d]\ :\ m=i+j-1.
\end{align}
In combination, this yields for every $m\in[d]$ and $i,j\in[d]$ such that $m=i+j-1$:
\begin{align}
\lambda_m-\lambda_m'&\leq(\mu_i+\nu_j)-n+\sum_{r\in[d]:r\neq m}\lambda'_r\\
&\leq\mu_m+\nu_1-n+\sum_{r\in[d]:r\neq m}(\mu_{r}+\gamma_1)\\
&=\sum_{l=1}^d\mu_l+\nu_1-n+(d-1)\gamma_1)\\
&\leq l-n+d\cdot k\\
&\leq (d-1)\cdot k,
\end{align}
as well as
\begin{align}
\lambda_m'-\lambda_m&\leq(\mu_i+\gamma_j)-n+\sum_{r\in[d]:r\neq m}\lambda_r\\
&\leq\mu_m+\gamma_1-n+\nu_1+\sum_{r\in[d]:r\neq m}\mu_{r}\\
&\leq l-n+d\cdot k\\
&\leq (d-1)\cdot k,
\end{align}
so $|\lambda_m-\lambda'_m|\leq (d-1)|B|$ for all $m\in[d]$, as was to be proven. The whole argument is completely independent of where we make the cut between $A$ and $B$, hence Theorem \ref{mainresult}.
\end{proof}
\begin{lemma}
If, for $\lambda\in YF_{2,n}$ it holds $\lambda_1=n$ then for all $\lambda'\in YF_{2,n}$ we have
\begin{align}
X_{\lambda\lambda'}\leq2^{k\cdot h(\lambda'_2/k)},
\end{align}
if $\lambda'_2/k\in[0,1]$ and $X_{\lambda\lambda'}=0$, else.
\end{lemma}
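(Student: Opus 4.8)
The plan is to exploit the hypothesis $\lambda_1=n$, which makes $\lambda=(n,0)$ a single row and collapses almost all of the optimisation. First I would pin down the factors allowed by the side constraint $c^\lambda_{\mu\nu}\neq 0$. Using the spectral characterisation quoted from \cite{christandl-cc}, a nonzero $c^\lambda_{\mu\nu}$ requires $\mathfrak A,\mathfrak B\geq 0$ with $\mathfrak A+\mathfrak B=\mathfrak C$ and $\spec(\mathfrak C)=\lambda=(n,0)$; since $\mathfrak C$ is then rank one and $0\leq\mathfrak A\leq\mathfrak C$ forces $\mathrm{range}(\mathfrak A)\subseteq\mathrm{range}(\mathfrak C)$, both $\mathfrak A$ and $\mathfrak B$ are rank at most one, so $\mu=(l,0)$ and $\nu=(k,0)$. (Equivalently, a single-row $\lambda$ forces single-row factors in the Littlewood--Richardson rule.) In particular $\dim F_\mu=\dim F_\nu=1$, so
\begin{align}
X_{\lambda\lambda'}=\max\{\dim F_\gamma:\ c^{\lambda'}_{(l,0),\gamma}\neq 0,\ \gamma\in YF_{2,k}\}.
\end{align}

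Next I would determine, with $\mu=(l,0)$ now fixed, which $\gamma$ satisfy $c^{\lambda'}_{(l,0),\gamma}\neq 0$. Since one factor is a single row this is the Pieri rule: $c^{\lambda'}_{(l,0),\gamma}\neq 0$ exactly when $\lambda'/\gamma$ is a horizontal strip of size $l$, i.e.\ (for two rows) the interlacing $\lambda'_1\geq\gamma_1\geq\lambda'_2\geq\gamma_2\geq 0$ holds with $\gamma_1+\gamma_2=k$. The same conclusion follows from the spectral picture applied to $\mathfrak A+\mathfrak D=\mathfrak C'$ with $\spec(\mathfrak A)=(l,0)$. Writing $\gamma_1=k-\gamma_2$, the interlacing is equivalent to
\begin{align}
\max(0,\lambda'_2-l)\leq\gamma_2\leq\min(\lambda'_2,k-\lambda'_2).
\end{align}
This interval is nonempty precisely when $\lambda'_2\leq k$, that is $\lambda'_2/k\in[0,1]$; when $\lambda'_2>k$ there is no admissible $\gamma$, the maximum is over the empty set, and $X_{\lambda\lambda'}=0$, which is the second alternative of the Lemma.

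It remains to bound $\dim F_\gamma$ on the admissible range. I would use the two-row dimension formula $\dim F_{(\gamma_1,\gamma_2)}=\binom{k}{\gamma_2}-\binom{k}{\gamma_2-1}\leq\binom{k}{\gamma_2}$ together with the standard entropy estimate $\binom{k}{j}\leq 2^{k\,h(j/k)}$. Because the admissible upper endpoint $\min(\lambda'_2,k-\lambda'_2)$ never exceeds $k/2$, the binomial $\binom{k}{\gamma_2}$ is increasing across the whole admissible interval, so for every admissible $\gamma$ one has $\dim F_\gamma\leq\binom{k}{\gamma_2}\leq\binom{k}{\gamma_2^{\ast}}$ with $\gamma_2^{\ast}=\min(\lambda'_2,k-\lambda'_2)$. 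A short case split then closes the argument: if $\lambda'_2\leq k/2$ then $\gamma_2^{\ast}=\lambda'_2$ and $\binom{k}{\gamma_2^{\ast}}\leq 2^{k\,h(\lambda'_2/k)}$; if $\lambda'_2>k/2$ then $\gamma_2^{\ast}=k-\lambda'_2$ and $h((k-\lambda'_2)/k)=h(\lambda'_2/k)$ by the symmetry $h(x)=h(1-x)$, giving the same bound. Taking the maximum over $\gamma$ yields $X_{\lambda\lambda'}\leq 2^{k\,h(\lambda'_2/k)}$.

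Each step is short; the place to be careful — the main obstacle — is the bookkeeping of the second paragraph: getting the Pieri/interlacing constraints exactly right, reducing them to the clean interval for $\gamma_2$, and correctly reading off that admissibility fails iff $\lambda'_2/k\notin[0,1]$. The entropy symmetry $h(x)=h(1-x)$ in the final case is easy to overlook but is precisely what makes the two cases collapse to the single stated bound.
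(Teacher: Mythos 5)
Your proposal is correct and follows essentially the same route as the paper: reduce to maximizing $\dim F_\gamma$ over the $\gamma$ compatible with the single-row $\mu=(l,0)$ (your Pieri/interlacing conditions are equivalent to the paper's three Horn inequalities combined with the trace condition), then apply an entropic bound on two-row dimensions. If anything, your treatment is slightly more careful than the paper's, which simply ``sets $\gamma_1=\lambda'_2$'': you handle the case $\lambda'_2<k/2$ (where that choice is not a valid Young frame) via the symmetry $h(x)=h(1-x)$, and you explicitly verify that the admissible set is empty exactly when $\lambda'_2>k$.
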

\begin{remark}
This also shows that (not too surprisingly) the output of the depolarising channel, given the input $\pi_\lambda$ with $\lambda_1=n$, is concentrated around symmetric Werner states satisfying $\lambda_1'\approx n\cdot\frac{p}{2}$. More interesting properties remain to be investigated.
\end{remark}
\begin{proof}
By the arguments from the previous proof, especially inequalities (\ref{eqn:horns1}) and (\ref{eqn:horns2}) we know that $c^{\lambda}_{\mu\nu}\neq0$ can only hold if $\mu_1=l$ and $\nu_1=k$. Hence $\dim F_\mu=\dim F_\nu=1$. For $\gamma$ to satisfy $c^{\lambda'}_{\mu\gamma}\neq0$ this implies
\begin{align}
\lambda'_1\leq l+\gamma_1,\qquad\lambda'_2\leq l+\gamma_2,\qquad\lambda'_2\leq\gamma_1.
\end{align}
Obviously, $\dim F_\gamma$ strictly decreases while $\gamma_1$ increases, hence setting $\gamma_1=\lambda_2$ yields an upper bound on $\dim F_\gamma$. Then, invoking inequality (30) from \cite{noetzel} we get
\begin{align}
\dim F_\gamma\leq2^{k\cdot h(\gamma_1/k)}=2^{k\cdot h(\lambda'_2/k)}.
\end{align}
\end{proof}
\end{section}
\begin{acknowledgement}
Many thanks go to Matthias Christandl for stimulating discussions about representation theory and pointing out to us a more elegant way of proof. We also want to thank Andreas Winter for sharing with us his knowledge about some more recent results concerning the depolarising channel. This work was supported by the BMBF via grant 01BQ1050.
\end{acknowledgement}

\end{document}